\documentclass[reprint,twocolumn,superscriptaddress,showpacs,nofootinbib,notitlepage,preprintnumbers,aps,pra,floatfix]{revtex4-2}
\usepackage[utf8]{inputenc}
\usepackage{graphicx}
\usepackage{latexsym,amsmath,amssymb,lmodern,float,url}
\usepackage{qcircuit}
\usepackage{natbib}
\usepackage{color}
\usepackage{microtype}
\usepackage{bbold}
\usepackage{slashed}
\usepackage{multirow}
\usepackage{tikz}
\usepackage{xcolor}
\usetikzlibrary{shapes}
\usepackage{adjustbox}
\usepackage{braket}
\usepackage[linesnumbered,ruled,vlined]{algorithm2e}

 \usepackage[normalem]{ulem}
\usepackage{amsthm}
\usepackage{pgfmath}

\newcommand{\sref}[1]{Sec.~\ref{#1}}

\usepackage[colorlinks=true,backref=false, linktocpage=true,
citecolor=blue,urlcolor=blue,linkcolor=blue,pdfpagemode=UseOutlines]{hyperref}
\hypersetup{%
 bookmarksnumbered=true,
 pdftitle = {},
 pdfsubject = {},
 pdfauthor = {},
 pdfkeywords = {}
}
\usepackage{cleveref}
\Crefname{appendix}{App.}{Apps.}
\Crefname{equation}{Eq.}{Eqs.}
\Crefname{figure}{Fig.}{Figs.}

\newcommand{\yint}{3.20(13)}
\newcommand{\slope}{10.77(3)}
\newcommand{\yintm}{2.193(11)}
\newcommand{\slopem}{8.621(7)}
\pgfmathsetmacro{\logthreeoften}{ln(10)/ln(3)}
\pgfmathsetmacro{\fullcomplexity}{10 - (3.5 * 8.621) / \logthreeoften }
\pgfmathsetmacro{\householdercomplexity}{6 - (2.5 * 0.5 * 10.77) / \logthreeoften }
\pgfmathsetmacro{\slopethree}{10.77 / \logthreeoften }
\pgfmathsetmacro{\slopemthree}{8.621 / \logthreeoften }
\def\bZ {\mathbb{Z}}

\newtheorem{lemma}{Lemma}

\usepackage{silence}
\WarningFilter{revtex4-2}{Repair the float}

\begin{document}
 \preprint{FERMILAB-PUB-25-0002-SQMS-T}
 \title{Synthesis of Single Qutrit Circuits from Clifford+R}
 \author{Erik J. Gustafson}
\email{egustafson@usra.edu}
\affiliation{Superconducting and Quantum Materials System Center (SQMS), Batavia, Illinois, 60510, USA.}
\affiliation{Fermi National Accelerator Laboratory, Batavia,  Illinois, 60510, USA}
\affiliation{Quantum Artificial Intelligence Laboratory (QuAIL),
NASA Ames Research Center, Moffett Field, CA, 94035, USA}
\affiliation{USRA Research Institute for Advanced Computer Science (RIACS), Mountain View, CA, 94043, USA}
\author{Henry Lamm}
\email{hlamm@fnal.gov}
\affiliation{Superconducting and Quantum Materials System Center (SQMS), Batavia, Illinois, 60510, USA.}
\affiliation{Fermi National Accelerator Laboratory, Batavia, Illinois, 60510, USA}
\author{Diyi Liu}
\email{liu00994@umn.edu}
\affiliation{School of Mathematics, University of Minnesota-Twin Cities, Minneapolis, MN, 55414, USA}
\author{Edison M. Murairi}
\email{emurairi@fnal.gov}
\affiliation{Superconducting and Quantum Materials System Center (SQMS), Batavia, Illinois, 60510, USA.}
\affiliation{Fermi National Accelerator Laboratory, Batavia, Illinois, 60510, USA}
\affiliation{Department of Physics, The George Washington University, Washington, DC  20052, USA}
\author{Shuchen Zhu}
\email{shuchen.zhu@duke.edu} 
\affiliation{Department of Mathematics, Duke University, Durham, NC 27708, USA}
\date{\today}

\begin{abstract}
We present two deterministic algorithms to approximate single-qutrit gates. These algorithms utilize the Clifford + $\mathbf{R}$ group to find the best approximation of diagonal rotations. The first algorithm exhaustively searches over the group; while the second algorithm searches only for  Householder reflections. 
The exhaustive search algorithm yields an average $\mathbf{R}$ count of $\yintm + \slopem \log_{10}(1 / \varepsilon)$, albeit with a time complexity of $\mathcal{O}(\varepsilon^{\pgfmathprintnumber[fixed,precision=2]{\fullcomplexity}})$. The Householder search algorithm results in a larger average $\mathbf{R}$ count of $\yint + \slope \log_{10}(1 / \varepsilon)$ at a reduced time complexity of $\mathcal{O}(\varepsilon^{\pgfmathprintnumber[fixed,precision=2]{\householdercomplexity}})$, greatly extending the reach in $\varepsilon$. These costs correspond asymptotically to 35\% and 69\% more non-Clifford gates compared to synthesizing the same unitary with two qubits. Such initial results are encouraging for using the $\mathbf{R}$ gate as the non-transversal gate for qutrit-based computation.
\end{abstract}
\maketitle
\section{Introduction}\label{sec:introduction}
Various fields anticipate quantum computing will tackle problems that are intractable for classical computers, but the large-scale architecture is still an active area of study. While most devices are qubit-based, many have access to higher levels and thus could be run as qudit-based platforms including: trapped ions~\cite{Ringbauer:2021lhi,Low:2023dlg,Nikolaeva:2024wxl}, transmons~\cite{Galda:2021ega,Goss:2022bqd,Luo:2022pxs,PRXQuantum.4.030327, Goss:2023frd,Cao:2023ekj,Seifert:2023ous,Iiyama:2024uos,Wang:2024xbz}, Rydberg arrays~\cite{Kruckenhauser:2022qfi,Cohen:2021axm}, photonic circuits~\cite{Chi:2022uql}, cold atoms~\cite{Kasper:2020mun,Ammenwerth:2024glw}, and superconducting radio frequency (SRF) cavities~\cite{Roy:2024uro}. While experimentally more challenging, there are advantages to developing qudit-based systems from an algorithmic perspective due to their enhanced effective connectivity, as native single-qudit $SU(d)$ rotations replace non-local multi-qubit circuits~\cite{jankovic2024noisy,Nikolaeva:2021rhq,Mansky:2022bai,Murairi:2024xpc}.
In practice, this allows for lower gate fidelities for the same algorithmic fidelity~\cite{wang2020qudits,iqbal2024qutrit,Majumdar:2024sms,Nikolaeva:2022wmq,Iiyama:2024uos,Saha:2023epn,Champion:2024ufp,kiktenko2020scalable, roy2023two,PRXQuantum.2.020311, zhu2024unified}. Such potential has lead to application-specific research in qudits across fields such as: material science~\cite{Malpathak:2024weo, kaxiras2019quantum,lin2019mathematical,Babbush:2023aww,chicco2023proof}
numerical optimization  \cite{Tancara:2024vck,Bottrill:2023lyt,Bravyi:2020pur,Saha:2022tol}, 
condensed matter~\cite{PhysRevA.110.062602,ogunkoya2024qutrit,choi2017dynamical,BassmanOftelie:2022hfz,Young:2023zxu}, and particle physics~\cite{Gustafson:2021jtq,Gustafson:2021qbt,Gonzalez-Cuadra:2022hxt,Popov:2023xft,nguyen2023simulating,Calajo:2024qrc,Illa:2024kmf, Zache:2023cfj,Gustafson:2024kym,Spagnoli:2025etu}.

 Regardless of the qudit dimensionality, reaching the goal of quantum utility requires fault-tolerant gate synthesis. That is, one identifies a finite set of generators that can efficiently approximate unitary operations to any required precision and support quantum error correction for the logical gates in the set. Fundamentally though, Eastin-Knill theorem~\cite{Eastin_2009} prevents a universal gate set that is also transversal, i.e. not all logical gates can be implemented in parallel across the physical qudits. This constrains the gate sets $\mathcal{G}$ for large-scale computation. Further, the nontransversal gate counts $N_{\mathcal{G}}$ dominates the computation costs \cite{nielsen_chuang_2010} but recent evidence suggests they may not be as expensive as thought~\cite{Gidney:2024alh,wills2024constant}. Here, we take the first steps beyond qubits by considering the fault-tolerant gate synthesis of $d=3$ qudits, called \emph{qutrits}.

The prevalence of the qubits extends to fault-tolerant gate synthesis. 
There, the Clifford group extended by $\mathbf{T}=\text{Diag}(1,e^{i\pi/4})$ -- denoted $(\mathbf{C}+\mathbf{T})_2$ -- is the leading choice. However, novel sets with non-Clifford transversal operations \cite{yoder2017universal,Zhu:2023xfg,rengaswamy2020optimality} or those based on groups larger than the Clifford group also exist~\cite{Parzanchevski_2018,blackman2022fastnavigationicosahedralgolden,Kubischta:2023nlb}. Once this finite gate set is selected, one must map all other circuit primitives to a gate set \emph{word}. While $(\mathbf{C}+\mathbf{T})_2$ does not result in the shortest word lengths compared to larger groups~\cite{PhysRevA.88.012313,Low:2018dby, blackman2022fastnavigationicosahedralgolden,Kubischta:2023nlb}, it has well-established error correction schemes and experimental demonstrations in contrast to other gate sets and codes. 
In the case of qutrits, three options have dominated the literature, which extend the qutrit-Cliffords $\mathbf{C}_3$, although others exist~\cite{Evra_2022,Kubischta:2024fuf}.  The first uses with the generalized $\mathbf{T}_3=\operatorname{Diag}(1, \omega, \omega^2)$~\cite{Howard:2012wif,Campbell:2012olh,Prakash:2018rrp,Prakash:2021axb,Amaro-Alcala:2024sfk} where $\omega=e^{2\pi i / 3}$ is the third root of unity. The other two common extensions are $\mathbf{D}(a,b,c) = \mathrm{D
iag}(\pm\xi^{a}, \pm\xi^{b}, \pm\xi^{c})$~\cite{Kalra:2023llu,Glaudell:2024tvz,Evra:2024bkr} where $\xi=e^{2\pi i / 9}$, and the metaplectic $\mathbf{R}=\operatorname{Diag}(1, 1, -1)$~\cite{Bocharov:2018zfk,Glaudell:2022vvu} gates. While the metaplectic set $(\mathbf{C}+\mathbf{R})_3$ is strictly a subset of $(\mathbf{C}+\mathbf{T})_3$~\cite{Glaudell:2022vvu}, it may prove more practical in hardware and thus warrants study. Thus, in this work we will study the synthesis of $SU(3)$ unitaries using $(\mathbf{C}+\mathbf{R})_3$.

This work is organized as follows. We briefly review qutrit-based computation and ring-based gate synthesis in  Sec.~\ref{sec:general-qutrit-background}.  This leads into Sec.~\ref{sec:algorithm-one} and~\ref{sec:algorithm-two} where two algorithms are constructed for approximating diagonal single-qutrit gates: the exhaustive search and the Householder search. This is followed by a method for determining the $(\mathbf{C}+\mathbf{R})_3$ word for the approximation in Sec.~\ref{sec:exact-synthesis}. Numerical studies are found in Sec.~\ref{sec:numerical-results} before concluding in Sec.~\ref{sec:conclusion}.

\section{Theoretical Background}\label{sec:general-qutrit-background}
Qutrit systems have a basis of three-level states $|0\rangle,~|1\rangle,$ and $|2\rangle$; their Hilbert spaces scale as $3^N$ which is a polynomial increase compared to the qubits' $2^N$. A universal—though not fault-tolerant—gate set for qutrits can be built from single-qutrit rotations and an entangling gate. One such set is the 18 two-level Givens rotations,
\begin{equation} R^{\alpha}_{(b, c)}(\theta) = e^{-i \theta / 2 \sigma^{\alpha}_{(b,c)}}, \end{equation}
where $\sigma^{\alpha}_{(b,c)}$ is the Pauli matrix $\sigma^{\alpha}=\{X,Y,Z\}$ acting on the $|b\rangle-|c\rangle$ subspace, combined with the $\textsc{CSum}$ gate:
\begin{equation} \textsc{CSum}|i\rangle|j\rangle = |i\rangle|i \oplus_3 j\rangle. \end{equation}
While $\textsc{CSum}$ is part of qutrit Clifford group, $R^{\alpha}_{(b, c)}(\theta)$ are not. 
Therefore if one wants to implement the $R^{\alpha}_{(b, c)}(\theta)$ rotations using a fault-tolerant gate set, one has to approximately synthesize these rotations using a finite set of gates such as $(\mathbf{C} + \mathbf{R})_3$. 

The Solovay-Kitaev theorem~\cite{Dawson:2005blj} states that any $d$-dimensional qudit gate $U \in SU(d)$ that cannot be exactly synthesized from a universal fault-tolerant gate set $\mathcal{G}$ can still be approximated by a gate $V\in\mathcal{G}$, with $||U - V|| \leq \varepsilon$ and $N_\mathcal{G}$ scaling as $\mathcal{O}(\log_d^a(1/\varepsilon))$.

Optimal algorithms correspond to the case where $a = 1$~\cite{Harrow2002,Bourgain2012}. Further, by considering the geometric structure of hyperspheres covering $SU(d)$, one can show that there always exists a unitary $U \in SU(d)$ whose $\varepsilon$-approximation $V$ requires at least~\cite{GLAUDELL201954,Prakash:2021axb}:
\begin{equation}
 \label{eq:ftgate_bound}
    N_\mathcal{G} \gtrsim \frac{\ln\left(A\right) + (d^{2}-1)\ln\left(\frac{1}{\varepsilon}\right)}{\ln(d(d-1))},
\end{equation}
where 
\begin{equation}
A = \frac{\sqrt{2^{d-1}d}[d(d-1)-1]\,\Gamma(\frac{d^2-1}{2})}{d^4\pi^{3/2(d-1)}G(d+1)},
\end{equation}
and $G(n) = \prod_{k=1}^{d-1} k!$ is the Barnes G-function.
For the case of qubits ($d = 2$), Eq.~(\ref{eq:ftgate_bound}) predicts:
\begin{align}
    N_\mathcal{G} &= 3\log_2(1/\varepsilon) - 5.65 \notag \\
        &= 9.97\log_{10}(1/\varepsilon) - 5.65.
\end{align}
For qutrits ($d = 3$), we find:
\begin{align}
    N_\mathcal{G} &= 4.9\log_3(1/\varepsilon) - 2.16 \notag \\
        &= 10.27\log_{10}(1/\varepsilon) - 2.16.
\end{align}

The described properties are independent of a specific $\mathcal{G}$, meaning that specific instances may exhibit different performance. We focus here on $\mathcal{G}=(\mathbf{C}+\mathbf{R})_3$ which is generated by the qutrit Hadamard $H$, phase gate $S$, and $\mathbf{R}$, following the notation of Ref.~\cite{Kalra:2023llu}:
\begin{align}
    \label{eq:clifford-r-presentation}
    &\phantom{xxxxx}H=\frac{1}{i\sqrt{3}}\begin{pmatrix}
        1 & 1 & 1\\
        1 & \omega & \omega^2\\
        1 & \omega^2 & \omega\\
    \end{pmatrix},\notag\\
    &S=\operatorname{Diag}(1,\omega,1),~\mathbf{R}=\operatorname{Diag}(1, 1, -1).
\end{align}
Other gates in $\mathbf{C}_3$  that will prove useful include: 
\begin{align}
   X&= \begin{pmatrix}
        0 & 0 & 1\\
        1 & 0 & 0\\
        0 & 1 & 0\\
    \end{pmatrix},~D(a,b,c) = \mathrm{diag}(\omega^{a}, \omega^{b}, \omega^{c}),
\end{align}
where $a,b,c\in \{0, 1, 2\}$. The decomposition of $D(a,b,c)$ into $H,S$ can be aided by the relations: 
\begin{align}
   ~H^{\dagger}=H^3,~X_{(0,1)}=HS^2H^2SH^{\dagger},\notag\\
    X_{(1,2)}=H^2,~\text{and}~X = H^{\dagger} S H^2 S^2 H^{\dagger},
\end{align}
As an example
\begin{equation}
D(1,2,1)=X_{(0,1)}SXSX_{(1,2)}S^2.
\end{equation}

Determining efficient synthesis is an ongoing area of research, even for qubits.  The best algorithms rely upon insight from number theory. One can show that any $V\in\mathcal{G}$ consists of matrix entries in a ring $\mathcal{R}$. This was proven and then used to perform exact single-qubit synthesis in $(\mathbf{C}+\mathbf{T})_2$~\cite{10.5555/2535649.2535653}.  The extension to approximate synthesis requires determining a $V$ that is within distance $\varepsilon$ of the desired gate.

Identifying these approximations requires solving a Diophantine equation. In general this is NP-complete~\cite{manders1978np} and thus finding the shortest word is often difficult. Luckily, this need not prevent subclasses of gates from being approximated efficiently. In particular, a probabilistic number-theoretic method to approximate diagonal single-qubit gates was first introduced in~\cite{Ross:2014okw}.  Since any single-qubit gate can be exactly represented by 3 diagonal gates and $\mathbf{C}_2$, it was demonstrated that $U\in SU(2)$ could be approximated with $N_\mathcal{G}=3\log_p (1/\varepsilon^3)$ with a gate set associated to a prime $p$.  For $(\mathbf{C}+\mathbf{T})_2$ one finds $p=2$ while other ``golden gate sets" can be used to reduce this bound up to $\frac{7}{3}\log_{59}(1/\varepsilon^3)$~\cite{blackman2022fastnavigationicosahedralgolden}. Further improvements have been made for $(\mathbf{C}+\mathbf{T})_2$, with the state-of-the-art being the repeat-until-success method of~\cite{PhysRevLett.114.080502}. Ultimately, finding ways to decompose and approximate arbitrary gates more efficiently than by diagonal matrices remains an open problem. 
   
Similarly, $(\mathbf{C}+\mathbf{R})_3$ can be related to a ring. Starting from the ring of Eisenstein integers $\mathcal{R}_{3}=\{a_{0}+a_{1}\omega~|~a_{i}\in \bZ,~\omega=e^\frac{{2\pi i}}{3}\}$ one localizes\footnote{Localization extends a ring while introducing a division-like operation.} it to obtain the ring $\mathcal{R}_{3,\chi}  =\{\frac{a}{\chi^{f}}~|~a\in \mathcal{R}_{3},~f\in \mathbb{N}_0 \}$ where $\chi = 1 + 2\omega = \sqrt{-3}$.  
Inspecting the generators (\Cref{eq:clifford-r-presentation}), one sees that all their entries are in $\mathcal{R}_{3,\chi}$. 
Therefore, the set generated by the $(\mathbf{C}+\mathbf{R})_3$ is the unitary group over the $\mathcal{R}_{3,\chi}$ ring, denoted by $U\left(3, \mathcal{R}_{3,\chi}\right)$~\cite{Kalra:2023llu,PhysRevA.93.012313}. Thus, any matrix $V\in(\mathbf{C}+\mathbf{R})_3$ has the form
\begin{align}
    V = \frac{1}{\chi^f}\begin{pmatrix}
        x_1 & y_1 & z_1\\
        x_2 & y_2 & z_2\\
        x_3 & y_3 & z_3
    \end{pmatrix},
\end{align}
where $x_i,y_i,z_i \in \mathcal{R}_3$, and $f\in\mathbb{N}_0$. 
An important consequence of this observation is that synthesis over this gate set can be performed with optimal $\mathcal{O}(\log_3 \frac{1}{\varepsilon})$. 
The remaining challenge lies in developing a constructive algorithm that reaches this complexity while minimizing constant factors.   
Such an algorithm can be decomposed into two steps. First, find a sufficient approximation $V\in U\left(3, \mathcal{R}_{3,\chi}\right)$.
Then, determine the $(\mathbf{C}+\mathbf{R})_3$ word that exactly synthesizes $V$. 

Any single-qutrit unitary is a product of a global phase and an $SU(3)$ matrix. Noting that a global phase may be synthesized by a  modification of our first algorithm, we focus only on $SU(3)$ matrices which by leveraging $\mathbf{C}_3$ rotations can be synthesized from
\begin{align}
    R^Z_{(0,1)}(\theta)=\operatorname{Diag}(e^{-i\theta/2},e^{i\theta / 2}, 1).
\end{align}
Thus goal is to find $V\in U(3,\mathcal{R}_{3,\chi})$ such that 
\begin{equation}
\label{eq:rznorm}
    || R^Z_{(0,1)}(\theta) - V || \leq \varepsilon,
\end{equation}
given a choice of $\theta$ and error, $\varepsilon > 0$. 
We will use the Frobenius norm\footnote{The Frobenius norm is $\|A\|^2_{F} = \text{tr}\left(A^\dagger A\right) = \sum_{i,j} |A_{ij}|^2$.} throughout this work. Furthermore, anticipating the large cost of implementing non-Clifford gates, we use the number of $\mathbf{R}$ gates, $N_\mathbf{R}$ as the quantum complexity of the synthesis problem. 

To solve \Cref{eq:rznorm}, we will present two algorithms. The first algorithm conducts an exhaustive search over $(\mathbf{C}+\mathbf{R})_3$ group, yielding good results but incurring significant classical runtime. The second algorithm restricts its search to Householder reflection gates, improving classical complexity while requiring an increased $N_\mathbf{R}$. The following sections provide detailed explanations of both algorithms.

\section{Exhaustive search algorithm}\label{sec:algorithm-one}
To find a gate $V$ that satisfies \Cref{eq:rznorm}, it behooves us to expand the Frobenius norm:
\begin{align}
\label{eq:rzapprox}
    ||R_{(0,1)}^Z(\theta)-V||^2 &= \sum\limits_{j}\left(\sum\limits_{i} \left|R_{(0,1)}^Z(\theta)_{ij} - V_{ij}\right|^2\right)\notag\\&= \sum\limits_{j}||R_{(0,1)}^Z(\theta)_{j} - V_{j}||^2.
\end{align}
From this, we see that the norm can be decomposed into a sum over column vectors, which each contributes $\|R_{(0,1)}^Z(\theta)_{i} - V_{i}\|^2$. As a result, approximating $R_{(0,1)}^Z(\theta)$ may be reduced to approximating each of its column vectors with errors $\varepsilon_i$ ($i = 1,2,3$) such that $\sum_i \varepsilon^2_i \leq \varepsilon^2$.    

Consider a target unit vector $\mathbf{t}(j)=e^{i\alpha} \delta_{ij}$ with a single nonzero entry and $\alpha$ being a real number. 
To approximate it using a unit vector $\mathbf{v}$ imposes a condition on only one entry of $\mathbf{v}$:
\begin{align}
    \|\mathbf{t}(j) - \mathbf{v}\|^2 &= \sum\limits_{i} |e^{i\alpha}\delta_{ij} - v_i|^2\notag\\
    &= 2 - 2\,\operatorname{Re}\left(v_j\,e^{-i\alpha}\right).
\end{align}
Requiring this be bound by $\varepsilon_j^2$, we can apply it to \Cref{eq:rzapprox} to give:
\begin{align}
    \label{eq:conditions-diag}
    {\operatorname{Re}}\left(\frac{x_1}{\chi^f}e^{i\frac{\theta}{2}}\right) \geq &\eta(\varepsilon_1), \quad {\operatorname{Re}}\left(\frac{y_2}{\chi^f}e^{-i\frac{\theta}{2}}\right) \geq \eta(\varepsilon_2),\notag\\ 
    &{\operatorname{Re}}\left(\frac{z_3}{\chi^f}\right) \geq \eta(\varepsilon_3),
\end{align}
where $\eta(\varepsilon) := 1 - \varepsilon^2/2$. To derive an algorithm, it is useful to further simplify Eq.~\eqref{eq:conditions-diag}.  
To do so, we note
\begin{equation}
    \chi^{-f} =(-3)^{-f/2} = (-1)^{\lceil f/2 \rceil}3^{-\lceil f/2 \rceil} \chi^{\bar{f}},
\end{equation}
where $\bar{f} := f \mod 2$. 
By introducing the change of variables   
\begin{equation}
\label{eq:cov}
    x^\prime_1 = (-1)^{\lceil f/2\rceil} \chi^{\bar{f}} x_1
\end{equation} and applying similar transformations for $y_2$ and $z_3$, Eq. \eqref{eq:conditions-diag} can be rewritten as
\begin{align}
    \label{eq:approx-constraints-full-search}
    \operatorname{Re}(x^\prime_1 e^{\mathrm{i}\theta/2}) \geq 3^{\lceil f/2\rceil } \eta(\varepsilon_1),\notag\\
     \operatorname{Re}(y^\prime_2 e^{-\mathrm{i}\theta/2}) \geq 3^{\lceil f/2\rceil } \eta(\varepsilon_2),\notag\\
     \operatorname{Re}(z^\prime_3 ) \geq 3^{\lceil f/2\rceil } \eta(\varepsilon_3),
\end{align}
Moreover, the unitarity of $V$ imposes the bounds 
\begin{equation}
\label{eq:exunbounds}
    |x^\prime_1|^2, |y^\prime_2|^2, |z^\prime_3|^2  \leq 3^{2\lceil f/2\rceil}.
\end{equation}

The constraints of \Cref{eq:approx-constraints-full-search,eq:exunbounds} have straightforward geometric interpretations. Any complex number $z$ (e.g. $e^{i\theta/2}$) can be mapped to a vector in $\mathbb{R}^2$ as $z \mapsto (\operatorname{Re}(z), \operatorname{Im}(z))^T$. Similarly, any Eisenstein integer $x_1 + x_2\omega$ can be mapped to vectors $\mathbf{y} = (x_1 - \frac{x_2}{2}, \frac{x_2 \sqrt{3}}{2})^T$ in $\mathbb{R}^2$. As a result, the Eisenstein integers form an integer lattice $\mathcal{L}_1$ given by
\begin{equation}
    \label{eq:eis-lattice}
    \mathcal{L}_1 = \left\{ \mathbf{y} = B_1\,  \mathbf{x} \, \Big|\, \mathbf{x} \in \mathbb{Z}^2 \right\},
\end{equation}
where 
\begin{equation}
    \label{eq:eis-lattice-basis}
    B_1 = \begin{pmatrix}
        1 & -\frac{1}{2}\\
        0 & \frac{\sqrt{3}}{2}
    \end{pmatrix}.
\end{equation}

From Equations \eqref{eq:eis-lattice} and \eqref{eq:eis-lattice-basis}, we find that it is convenient to work with a half-integer parametrization $(p,q)$. That is, for an Eisenstein integer $x_1 + x_2\omega$, we set $p = x_1 - x_2/2$ and $q = x_2/2$. Thus satisfying \Cref{eq:approx-constraints-full-search,eq:exunbounds} corresponds to finding three lattice vectors $(p,q\sqrt{3})^T$ each in a different region. Using $\alpha$ to represent the desired angles and
\begin{equation}
\label{eq:radius}
r_1=3^{\lceil f/2\rceil}\eta(\varepsilon_i),\quad r_2=3^{\lceil f/2\rceil},
\end{equation}
the three regions are then defined by
\begin{align}
    \label{eq:sampling-reduced}
    p \cos\alpha + q\sqrt{3}\sin\alpha \geq r_1 && \text{and} && p^2 + 3q^2 \leq r^2_2.
\end{align}
From the first inequality, it follows that $p^2\cos^2\alpha \geq (r_1 - q\sqrt{3}\sin\alpha)^2$. %
Using the second constraint and completing the square, it can be verified that $|\sqrt{3}q - r_1\sin\alpha| \leq \sqrt{r^2_2 - r^2_1} \cos\alpha$. 
Therefore, the half integers $q$ are sampled within the interval
\begin{align}
    \label{eq:enum-q-bound}
    \frac{1}{2}\left\lceil S_- \right\rceil \leq q \leq \frac{1}{2} \left\lfloor S_+ \right\rfloor,
\end{align}
where $S_\pm := \frac{2}{\sqrt{3}}(r_1 \sin\alpha \pm \sqrt{r^2_2 - r^2_1} \cos\alpha)$. 
For a given $q$, the possible values of $p$ satisfy $p + q \in \mathbb{Z}$ and lie in the interval
\begin{align}
    \label{eq:enum-p-bound}
    \frac{1}{2}\left\lceil 2 T_- \right\rceil \leq p \leq  \frac{1}{2}\left\lfloor 2 T_+ \right\rfloor,
\end{align}
where $T_- := \text{max}\left(-\sqrt{r^2_2 - 3q^2}, \frac{r_1 - \sqrt{3}q\sin\alpha}{\cos\alpha}\right)$ and $T_+ := \sqrt{r^2_2 - 3q^2}$. 

These regions and candidate lattice vectors are depicted on the left of Fig.~\ref{fig:algos}. Then, the diagonal entries of $V$ -- $x_1$, $y_2$ and $z_3$ -- may be easily obtained via division by $(-1)^{\lceil f/2\rceil} \chi^{\bar{f}}$. 

\begin{figure*}[!ht]
   \includegraphics[width=0.9\linewidth]{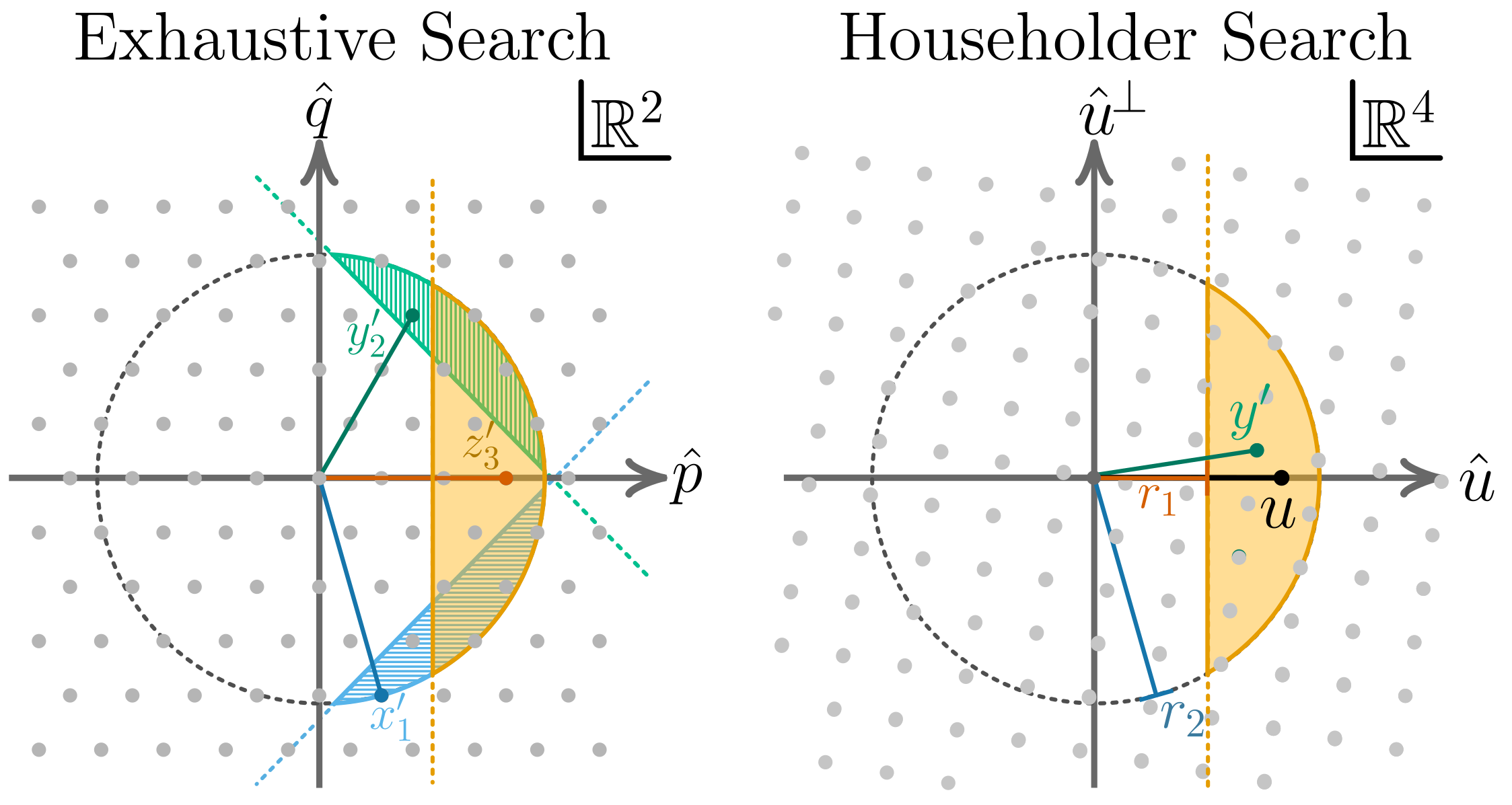}
    \caption{Search regions for: (left) Exhaustive Search Algorithm. The search regions of Eq.~(\ref{eq:approx-constraints-full-search}) are: the blue corresponds to $x^\prime_1$; the green region corresponds to $y^\prime_2$ and the orange region corresponds to $z^\prime_3$; (right) Householder Search Algorithm. $\mathbf{u} = \frac{1}{\sqrt{2}}(\cos\theta/2, \sin\theta/2, -1, 0)^T$. The yellow region corresponds to the search region in Eq.~(\ref{eq:approx-lattice-version}).}
    \label{fig:algos}
\end{figure*}

With these constraints for fixed $f$, one can enumerate possible candidate triplets \( (x_1, y_2, z_3) \). Then for a particular triplet \( (x_1, y_2, z_3)/\chi^f \), one should check that they satisfy \Cref{eq:rznorm} and the necessary and sufficient condition for a set of complex numbers to serve as the diagonal entries of a unitary matrix~\cite{horn1954doubly} which in our case are:
\begin{align}\label{eq:alfred-horn-conditions}
    |x_1| + |y_2| &- |z_3| > \sqrt{3}^f,\,  |x_1| + |z_3| - |y_2| > \sqrt{3}^f,\notag\\
    &|y_2| + |z_3| - |x_1| > \sqrt{3}^f.
\end{align}
For triplets satisfying these conditions, we proceed to solve for the off-diagonal entries.  This unitary matrix completion problem is addressed given a triplet of candidate diagonal entries $(x_1, y_2, z_3)$.
The search for off-diagonal entries can be done by exhaustive search.
The norms of column entries must satisfy the constraint
\begin{align}
    \label{eq:off-diag}
    |x_1|^2+|x_2|^2 + |x_3|^2 &= 3^f
\end{align}
and analogous ones for $y_i,z_i$. To solve these equations, one notes that $|x_2|^2 \leq 3^{f} - |x_1|^2$. 
Thus, for $N=[0,3^f - |x_1|^2]$, one tries to solve $|x_2|^2 = N$ using the method described in App.~\ref{app:norm-eq}. If a solution exists, the process is repeated for $|x_3|^2 = 3^f - |x_1|^2 - |x_2|^2$. One can precompute a lookup table for $|x|^2 = N$ so that this norm equation is solved only once for a given $N$ up to a maximum $N$,
\begin{align}
    N &= \text{max}(3^f - |x_1|^2, 3^f - |y_2|^2, 3^f - |z_3|^2)\notag\\
    &\leq 3^{2\lceil f/2\rceil}\left(\varepsilon^2-\frac{\varepsilon^4}{4}\right),
\end{align}
or in other words, $N = \mathcal{O}(3^f\varepsilon^2)$.
With off-diagonal entries determined, the final step is to verify that the resulting matrix $V$ is unitary. If no valid unitary matrix is found after exhausting all possible triplets, we increment \( f \) and restart the enumeration process. The pseudocode is outlined in Algorithm~\ref{alg:full-approx-algorithm}.

\begin{algorithm}[hbt!]
\caption{Exhaustive search.\phantom{XXXXXXXXX}}
\label{alg:full-approx-algorithm}
\KwData{$\theta$ and $\varepsilon$}
\KwResult{$V\in U\left(3, \mathcal{R}_{3,\chi}\right)$ such that $\|U - V\|_F \leq \varepsilon$}
\(f \gets 0\)

 $\Gamma \gets (-1)^{\lceil f/2 \rceil} \chi^{\bar{f}}$
 
\While{ $V$ is not found}{
    $\mathcal{S}$  $\leftarrow \textsc{EnumerateCandidates}(\theta, f, \varepsilon)$
    
    \For{each $(x^\prime_1, y^\prime_2, z^\prime_3)$ in $\mathcal{S}$ }{
        \If{$\Gamma$ divides $x^\prime_1$, $\Gamma$ divides $y^\prime_2$ \& $\Gamma$ divides $z^\prime_3$}{
            $x_1 \gets x^\prime_1/\Gamma$, $y_2 \gets y^\prime_2/\Gamma$, $z_3 \gets z^\prime_3/\Gamma$
            
            \If {\Cref{eq:alfred-horn-conditions} is satisfied}{
                $V \gets $ \textsc{CompleteUnitary}$(x_1, y_2,z_3)$
                \If{$V$ is found}{
                    \Return $V$ 
                }
            }
        }
    }
    increment $f$
    
    $\Gamma \gets (-1)^{\lceil f/2 \rceil} \chi^{\bar{f}}$
}
\end{algorithm}

The time complexity can be split into two steps. 
\textsc{Step (1)} involves enumerating all valid triplets $(x_1, y_2, z_3)$, where the area of the search regions in Fig.~\ref{fig:algos} determines the complexity. One can show these areas are
\begin{equation}
    3^{2\lceil f/2\rceil}\varepsilon^2\left(1 - \frac{\varepsilon^2}{4}\right)\arccos\left(1 - \frac{\varepsilon^2}{2}\right).
\end{equation}
Expanding $\arccos{\left(1 - \varepsilon^2/2\right)} = \varepsilon + \mathcal{O}(\varepsilon^3)$, we observe that this area scales as $\mathcal{O}(3^f \varepsilon^3)$. Consequently, the total complexity of enumeration is $\mathcal{O}(3^{3f} \varepsilon^9)$.

\textsc{Step (2)} of the algorithm is to complete the unitary. The complexity is dominated by solving Eq.~\eqref{eq:off-diag} for iteratively for $|x|^2 = N$ where $N = \mathcal{O}(3^{f}\varepsilon^2)$.
Which in App.~\ref{app:norm-eq}, we showed leads to a complexity of $\mathcal{O}(3^{f/2} \varepsilon)$. 

Combining the complexity of both parts, we obtain $\mathcal{O}(3^{7f/2}\varepsilon^{10})$ for a fixed $f$. The algorithm terminates at $f_{\text{max}}$ once a solution is found. Recalling that we expect scaling $f_{\text{max}} = c_1 \log\frac{1}{\varepsilon}$, the overall complexity is 
\begin{align}
    \label{eq:full-complexity}
    \sum\limits_{f=0}^{f_{\text{max}}} 3^{\frac{7f}{2}}\varepsilon^{10} &= \mathcal{O}\left(3^{\frac{7}{2}f_{\text{max}}}\varepsilon^{10}\right)= \mathcal{O}\left(\varepsilon^{10 - \frac{7}{2}c_1}\right).
\end{align}

Finally, it is possible to derive a rough lower bound on $c_1$ by estimating the number of Eisenstein integers in each search region as follows. For a lattice $\mathcal{L}$ with basis $B = \left(\mathbf{b}_1, \mathbf{b}_2, {...}, \mathbf{b}_m\right)$, the fundamental domain is
\begin{equation}
    \mathcal{F}(\mathcal{L}) = \left\{\sum_{i}c_i\,\mathbf{b}_i \Big| \, c_i \in \mathbb{R}\, \text{and}\, c_i \in \left[0,1\right)\right\}.
\end{equation}
From Ref.~\cite{Cassels1959}, $\mathcal{L}$ is a uniform tiling of the ambient space with its fundamental domain. As a result, the volume of this domain is
\begin{equation}
\text{vol}(\mathcal{F}(\mathcal{L})) = \sqrt{\det(B^TB)},
\end{equation}
represents the inverse density of the lattice points. Therefore, the number of lattice points within a region, $\mathcal{K}$, of volume $\text{vol}(\mathcal{K})$ can be approximated by the ratio $\frac{\text{vol}(\mathcal{K})}{\text{vol}(\mathcal{F}(\mathcal{L}))}$.

For us, each search regions has a $\text{vol}(\mathcal{K})=\mathcal{O}(3^f\varepsilon^3)$, and $\mathcal{L}_1$, we find $\text{vol}(\mathcal{F}(\mathcal{L}_1)) = \sqrt{3}/2$. Requiring that at least one lattice point exists per volume then corresponds to 
\begin{equation}
    1\leq\frac{\text{vol}(\mathcal{K})}{\text{vol}(\mathcal{F}(\mathcal{L}))}=\frac{3^f\varepsilon^3}{\sqrt{3}/2}=2\times3^{c_1 \log_3\frac{1}{\varepsilon}-\frac{1}{2}}\varepsilon^3.
\end{equation} 
Reducing this inequality yields $c_1 \geq 3$ and consequently $N_{\mathbf{R}} \geq 3\log_3\frac{1}{\varepsilon} + C$ where $C$ is a constant.

\section{Householder reflection search}\label{sec:algorithm-two}

The poor scaling of the time complexity with $\varepsilon$ of Algorithm~\ref{alg:full-approx-algorithm} motivates us to search for more efficient ways to approximate diagonal gates.  It was argued in Ref.~\cite{PhysRevA.93.012313} that by restricting to  Householder reflection of the form 
\begin{equation}
    R_{\mathbf{u}} = \mathbb{1} - 2 \mathbf{u} \mathbf{u}^\dagger
\end{equation} where $\mathbf{u}$ is a unit vector  there exists a probabilistic classical algorithm returning an approximation with average $N_R \sim 8\log_3\frac{1}{\varepsilon}$ using an average classical complexity $\mathcal{O}(\log\frac{1}{\varepsilon})$. Further work in Ref.~\cite{Bocharov:2018zfk} reduced these estimates under certain number-theoretic conjectures to $N_R \sim 5\log_3\frac{1}{\varepsilon}$ for `non-exceptional' target two-level unit vectors.  We present in these works an explicit algorithm and demonstrate its scaling for Householder reflections in the $(\mathbf{C} + \mathbf{R})_3$ group. 

Ref.~\cite{PhysRevA.93.012313} reformulated the approximation problem to only require approximating a unit vector $\mathbf{u}=(u_1,u_2,u_3)$ with another unit vector   $\mathbf{v} = \frac{1}{\chi^f}\left(v_1, v_2, v_3\right)^T$ such that the entries $v_{i} \in \mathcal{R}_3$ and $f \in \mathbb{N}_0$.\footnote{This problem is related to intrinsic Diophantine approximation~\cite{Cha2023}.} 
We refer to such a unit vector as a unit Eisenstein vector. It is straightforward to verify that a $R^Z_{(0,1)}(\theta)$ matrix, up to a permutation, corresponds to a Householder reflection i.e.  $R^Z_{(0,1)}(\theta) = X_{(0,1)} R_{\mathbf{u}}$ for some 
\begin{equation}
\label{eq:houseuvec}
\mathbf{u}=\frac{1}{\sqrt{2}}(e^{\mathrm{i}\theta/2}, -1, 0)^T.
\end{equation}

We can derive an upper bound between two Householder reflections $R_\mathbf{u}$ and $R_\mathbf{v}$. To start, the norm between two Householder reflections can be related to their associated unit vectors
\begin{equation}
    \|R_\mathbf{u} - R_\mathbf{v}\|^2 = 8\left(1 - \left|\mathbf{u}^\dagger\mathbf{v}\right|^2\right).
\end{equation}
Further, for any two vectors $\mathbf{u}$ and $\mathbf{v}$
\begin{equation}
    \|\mathbf{u} - \mathbf{v}\|^2 = 2\left(1 - \operatorname{Re}\left(\mathbf{u}^\dagger\mathbf{v}\right)\right).
\end{equation}
Since for any complex number $z$ the following inequality is true $(\text{Re}\left[z\right])^2 \leq |z|^2$, combining with the previous expressions yields
\begin{align}
    \label{eq:householder-error-bound}
    \|R_{\mathbf{u}} - R_{\mathbf{v}}\| \leq 2 \sqrt{2}\,\|\mathbf{u} - \mathbf{v}\| \, \delta\left(\mathbf{u}, \mathbf{v}\right), 
\end{align}
where 
\begin{equation}
    \delta\left(\mathbf{u}, \mathbf{v}\right) : = \sqrt{1 - \frac{\|\mathbf{u} - \mathbf{v}\|^2}{4}}\leq 1.
\end{equation} 
Using $\delta(\mathbf{u}, \mathbf{v})\leq1$ reproduces the bounds of Ref.~\cite{Kliuchnikov:2013ela} which demonstrates to approximate $R_{\mathbf{u}}$ within  $\varepsilon$, it suffices to identify a $\mathbf{v}$ such that
\begin{equation}
    ||\mathbf{u}-\mathbf{v}||\leq \varepsilon/(2\sqrt{2}).
\end{equation}
However, given $\delta(\mathbf{u}, \mathbf{v}) \leq 1$ in~\Cref{eq:householder-error-bound} means that the requirement $\|\mathbf{u} - \mathbf{v}\| \leq \varepsilon/(2\sqrt{2})$ may exclude some $R_{\mathbf{v}}$ that meet the desired accuracy. To address this, we introduce into our algorithm a contraction factor $0 < c \leq 1$ and adjust the tolerance to $\varepsilon/(2\sqrt{2} \,c)$. While $c < 1$ no longer guarantees that all reflections remain within $\varepsilon$, those exceeding this threshold can be checked and rejected. This approach reduces $N_{\mathbf{R}}$ without increasing the algorithmic complexity, but the enlarger search space will lead in practice to longer run times.

From \Cref{eq:houseuvec}, we see that $u_3 = 0$. Thus approximating $\mathbf{u}$ by $\mathbf{v} = \frac{1}{\sqrt{-3}^f}\left(v_1, v_2, v_3\right)^T$ imposes the condition

\begin{align}
    \label{eq:approx-cond-relabelled}
    \operatorname{Re}&\left(u_1^* v^\prime_1 + u_2^* v^\prime_2 \right) \geq r_1,
\end{align}
where we have rewritten terms using \Cref{eq:cov,eq:radius}. Additionally, since  $\mathbf{v}$ is a unit vector, it follows that
\begin{equation}
\label{eq:house_norm}
    |v^\prime_1| + |v^\prime_2|^2 \leq r^2_2.
\end{equation}

These constraints on $v^\prime_1$ and $v^\prime_2$ have a straightforward formulation and geometric interpretation in $\mathbb{R}^4$. Any two-level complex vector $ \mathbf{u}  \in \mathbb{C}^3$ (with $u_3=0$ in our case) can be mapped to $\mathbb{R}^4$ by the isomorphism
    $\mathbf{u} = (u_1, u_2, 0)^T  \mapsto \mathbf{u} = \left(\operatorname{Re}(u_1), \operatorname{Im}(u_1), \operatorname{Re}(u_2), \operatorname{Im}(u_2)\right)^T$. By a slight abuse of notation, we refer to both the two-level complex and its image in $\mathbb{R}^4$ using the same letter. When the context does not make it obvious which version is being discussed, we will clarify it explicitly.

In addition, consider for now the complex vector $(v_1, v_2)^T \in \mathbb{C}^2$ where $v_1, v_2 \in \mathcal{R}_3$. By defining these Eisenstein integers as $v_1 = x_1 + x_2 \omega$ and $v_2 = x_3 + x_4\omega$ where $x_i \in \mathbb{Z}$, we can represent the complex vector $(v_1, v_2)^T$ in $\mathbb{R}^4$ by a vector $\mathbf{y} = \left(x_1 - \frac{x_2}{2}, \frac{\sqrt{3}}{2} x_2, x_3 - \frac{x_4}{2}, \frac{\sqrt{3}}{2} x_4\right)^T$. Consequently, it can be seen that the set of all such vectors $\mathbf{y}$ forms an integer lattice $\mathcal{L}_2$ in $\mathbb{R}^4$, defined by
\begin{align}
    \label{eq:lattice-vector}
    \mathcal{L}_2 = \left\{\mathbf{y} = B_2\,\mathbf{x} \Big| x \in \mathbb{Z}^4 \right\} ,
\end{align}
where $B_2 = B_1 \oplus B_1$ and $B_1$ is defined in Eq.~\eqref{eq:eis-lattice-basis}. The Eisenstein integers $v_1$ and $v_2$ can be recovered from $\mathbf{y}$ via $\mathbf{x} = B^{-1}_2 \mathbf{y}$. 

To finally formulate the conditions in \Cref{eq:approx-cond-relabelled,eq:house_norm} as vector constraints in $\mathbb{R}^4$, we denote the image of $(v^\prime_1, v^\prime_2)^T$ by a lattice vector $\mathbf{y}^\prime \in \mathcal{L}_2$. Then, these constraints can be rewritten as lattice points $\mathbf{y^\prime}$ such that
\begin{align}
    \label{eq:approx-lattice-version}
    \mathbf{u}^T \mathbf{y}^\prime \geq r_1  && \text{and} && \mathbf{y^\prime}^{T} \mathbf{y^\prime} \leq r^2_2,
\end{align}
 In other words, we are looking for lattice points $\mathbf{y}^\prime$ above a hyperplane and inside a hypersphere of radius $r_2$ as demonstrated in Fig.~\ref{fig:algos}. 
Denoting the volume between the hyperplane ($\mathbf{u}^T \mathbf{y}' = r_1$) and the hypersphere by $\mathcal{D}(\mathbf{u}, f, \varepsilon)$, we have $\mathbf{y}^\prime \in \mathcal{L}_2 \cap \mathcal{D}(\mathbf{u}, f, \varepsilon)$. 

Once such an Eisenstein vector $\mathbf{v}=\frac{1}{\chi^f}\left(v_1, v_2, v_3\right)^T$ is found, the matrix $V = X_{(0,1)} R_{\mathbf{v}}$ is the desired approximation of $R^Z_{(0,1)}(\theta)$. It can be shown that synthesizing $V$ requires $N_\mathbf{R}\leq2f$~\cite{Kalra:2023llu}, and we will provide a deterministic search algorithm focused on minimizing $f$.

Approximating $V$ can thus be done in two iterated steps. \textsc{Step (1)}: Setting $\varepsilon^\prime = \varepsilon/(2 \sqrt{2} c)$, for a fixed $f \in \mathbb{N}_0$ and a unit vector $\mathbf{u} \in \mathbb{R}^4$, we enumerate all candidates $\mathbf{y}^\prime \in \mathcal{L}_2 \cap  \mathcal{D}(\mathbf{u}, f, \varepsilon^\prime)$ following the algorithm in App.~\ref{app:enumeration}. 
\textsc{Step (2)}: For each $\mathbf{y^\prime}$, we extract $v_1$ and $v_2$, and solve the norm equation $|v_3|^2 = 3^f - |v_1|^2 - |v_2|^2$ as discussed in App.~\ref{app:norm-eq}. 
If a solution is found, we construct the vector $\mathbf{v}$ and subsequently the matrix $V = X_{(0,1)}R_{\mathbf{v}}$. It only remains to test that $V$ approximates the target $R^Z_{(0,1)}(\theta)$ to the desired accuracy. Therefore, if $\|R^Z_{(0,1)}(\theta) - V\| \leq \varepsilon$, the matrix $V$ is returned. In the event that all candidates $\mathbf{y}^\prime$ are exhausted and no solution is found, we increment $f$ and repeat the procedure. The details are summarized in Algorithm~\ref{alg:approximation-algorithm}.

\begin{algorithm}
\caption{Householder reflection search.\phantom{XXX}}
\label{alg:approximation-algorithm}
 \KwData{$\theta$, $\varepsilon$, and $0 < c \leq 1$}
 \KwResult{$V \in (\mathbf{C} + \mathbf{R})_3$ group with $\|R^Z_{(0,1)}(\theta) - V\| \leq \varepsilon$.}
 
 $\mathbf{u} \gets \frac{1}{\sqrt{2}}(\cos\theta/2, \sin\theta/2, -1, 0)^T$
 
 $\varepsilon^\prime \gets \varepsilon/(2 \sqrt{2} c)$ 
 
 \(f \gets 0\)
 
 $\Gamma \gets (-1)^{\lceil f/2 \rceil} \chi^{\bar{f}}$
 
\While{$\mathbf{v}$ is not found}{
    \For{ \text{each } $\mathbf{y}^\prime$ in $\mathcal{L}_2 \cap \mathcal{D}(\mathbf{u}, f, \varepsilon^\prime)$ }{
         $\mathbf{x^\prime} \gets B^{-1}_2 \mathbf{y^\prime}$
         
         $v^\prime_1 \gets x^\prime_1 + x^\prime_2\omega$ and $v^\prime_2 \gets x^\prime_3 + x^\prime_4\omega$
         
        \If{$\Gamma$ divides $v^\prime_1$ and $\Gamma$ divides $v^\prime_2$}{
             $v_1 \gets v^\prime_1/\Gamma$ and $v_2 \gets v^\prime_2/\Gamma$
             
             solve $|v_3|^2 = 3^f - |v_1|^2 - |v_2|^2$ 
             
            \If{\text{$v_3$ is found}}{
                 $\mathbf{v} \gets \frac{1}{\chi^f}(v_1, v_2, v_3)^T$
                 
                 $V \gets X_{(0,1)} R_{\mathbf{v}}$
                 
                \If{$\|R^Z_{(0,1)}(\theta) - V\| \leq \varepsilon$}{
                    \Return $V$ 
               }
               }
               }
               }
     increment $f$
     
     $\Gamma \gets (-1)^{\lceil f/2 \rceil} \chi^{\bar{f}}$
}
\end{algorithm}

We now discuss the complexity of this algorithm. In \textsc{step (1)}, we enumerate all $\mathbf{y}'$ in the region bound by \Cref{eq:approx-lattice-version}. For a fixed $f$, the complexity of this corresponds to the volume of the region. Due to spherical symmetry, this volume corresponds to that of a hyperspherical cap in $\mathbb{R}^4$ defined by the constraints $y_4 \geq r_1$
and $\mathbf{y}^T\mathbf{y} \leq r^2_2$. According to Ref.~\cite{Li2010}, this volume is given by:
\begin{align}
    \label{eq:vol-4dim-cap1}
    \operatorname{vol}( \mathcal{D}(\mathbf{u}, f, \varepsilon)) &= \frac{\pi^{3/2}}{\Gamma(5/2)}r^{4}_2\int_{0}^{\phi} \sin^4\theta \, d\theta,
\end{align}
where $\phi = \arccos(r_1/r_2) = \arccos(1 - \frac{\varepsilon^2}{2})$. Evaluating this integral yields:
\begin{align}
    \label{eq:vol-4dim-cap2}
    \operatorname{vol}( \mathcal{D}(\mathbf{u}, f, \varepsilon)) &= \frac{\pi r^{4}_2}{24} \left(12\phi - 8\sin{(2\phi)} + \sin{(4\phi)}\right)\notag\\
    &=\frac{16\sqrt{2}\pi}{15}r_2^4\varepsilon^5+\mathcal{O}(r^4_2\varepsilon^7),
\end{align}
where we expanded in the $\varepsilon\rightarrow 0$ limit. Using the definition of $r_2$ in \Cref{eq:radius} gives us a final scaling for \textsc{Step (1)} of $\mathcal{O}\left(3^{2f} \varepsilon^5\right)$.

 The problem in \textsc{step (2)} is to solve the norm equation in \Cref{eq:house_norm} using the exhaustive search method in App.~\ref{app:norm-eq} has a complexity $\mathcal{O}(|v_3|)$. We show in App.~\ref{app:v3--normbound} that $|v_3| = \mathcal{O}\left(3^{f/2}\varepsilon\right)$. Combining these two steps, the complexity of the Householder search method is $\mathcal{O}(3^{5f/2} \varepsilon^6)$. Similarly to the exhaustive algorithm, this one terminates at $f_{\text{max}}$. Assuming the scaling $f_{\text{max}} = c_2\log \frac{1}{\varepsilon}$, the overall 
\begin{align}
    \label{eq:house-complexity}
    \sum\limits_{f=0}^{f_{\text{max}}} 3^{\frac{5}{2}f}\varepsilon^{6} &= \mathcal{O}\left(3^{\frac{5}{2}f_{\text{max}}}\varepsilon^{6}\right)= \mathcal{O}\left(\varepsilon^{6-\frac{5}{2}c_2}\right).
\end{align}
Similarly, it is also possible to place a rough lower bound on $c_2$. The volume of the search region in Fig.~\ref{fig:algos} (right) is $\mathcal{O}(3^{2f}\varepsilon^5)$. Using the arguments at the end of \sref{sec:algorithm-one}, we obtain $c_2 \geq 2.5$. This lower bound corresponds to $N_{\mathbf{R}} \geq 5 \log_3\frac{1}{\varepsilon} + C$ where $C$ is a constant.

\section{Exact Synthesis}\label{sec:exact-synthesis}
Given the approximation gate $V \in U(3,\mathcal{R}_{3,\chi})$, what remains is to determine the word in $(\mathbf{C}+\mathbf{T})_3$ which produces it.  To do this, one using the fact that any $V$ can be written optimally in a normal form~\cite{Kalra:2023llu}:
\begin{equation}
\label{eq:normal_form}
    V=\prod_i^f HD(a_{0,i}, a_{1,i}, a_{2,i})\mathbf{R}^{\varepsilon_i} X^{\delta_i}
\end{equation}
where $a_{0,i}, a_{1,i}, a_{2,i},\delta_i \in \{0, 1, 2\}$, $\varepsilon_i \in \{0, 1\}$. While this form is more complicated than the analogous normal form for qubits~\cite{matsumoto2008representationquantumcircuitsclifford} which contains only $H,T,S$, there is still only one non-Clifford gate, $\mathbf{R}$ since $D(a, b, c)$ and $X$ can be constructed from $H,S$ as shown above. Taken together, the total number of possible normal forms $N_N$ smaller than $f$ is \begin{equation}
    N_N=\sum_{i=0}^f(3^42^2)^i=\frac{324^{f+1}-1}{323}.
\end{equation}
To determine the correct normal form, and therefore circuit, we implement the algorithm from Ref.~\cite{Kalra:2023llu}, which guarantees optimality in $N_\mathbf{R}$.

To do this, we take advantage of the \emph{smallest denominator exponent} $\text{sde}(z)$ which corresponds to the smallest non-negative integer $f$ such that $z\chi^f\in \mathcal{R}_{3}$ when $z\in\mathcal{R}_{3,\chi}$. It was shown in Ref.~\cite{Kalra:2023llu} that all entries of a matrix in $U(3,\mathcal{R}_{3,\chi})$ have the same $\text{sde}(z)$. Therefore an algorithm for solving \Cref{eq:normal_form} is to iteratively by finding a set $a_{0,i}, a_{1,i}, a_{2,i},\delta_i,\varepsilon_i$ which reduces the $\text{sde}(z)$ by 1. Each reduction corresponds to most one $\mathbf{R}$ gate per iteration. Once the $\text{sde}(z)=0$, the resulting unitary is a Clifford element, which can then be obtained from a lookup table. The pseudocode is detailed in Algorithm~\ref{alg:unitary_decomposition}.

\begin{algorithm}
\caption{Decomposition of $V$ in $U(3, \mathcal{R}_{3,\chi})$.}
\label{alg:unitary_decomposition}
 \KwData{Unitary $U$  in  $U(3, \mathcal{R}_{3,\chi})$.\\
$T_{D}$ - table of all zero-sde unitaries in $U(3, \mathcal{R}_{3,\chi})$.}
 \KwResult{Sequence $S_{\text{out}}$ of $H, D, \mathbf{R}$, and $X$ gates that implement $U$.}
 $u \gets$ top left entry of $U$
 
 $S_{\text{out}} \gets$ Empty
 
 $s \gets \mathrm{sde}(u)$

\While{$s > 0$}{
     state $\gets$ unfound

    \ForAll{$a_0, a_1, a_2, \delta \in \{0, 1, 2\}, \varepsilon \in \{0, 1\}$}{
        \While{state = unfound}{
             $u \gets$ $(HD(a_0, a_1, a_2)\mathbf{R}^{\varepsilon}X^\delta U)_{00}$
             
            \If{$\mathrm{sde}(u) = s - 1$}{
                 state $\gets$ found
                 
                 append $X^{-\delta}\mathbf{R}^{-\varepsilon}D^{-1}H$ to $S_{\text{out}}$
                 
                 $s \gets \mathrm{sde}(u)$
                 
                 $U \gets HD(a_0, a_1, a_2)\mathbf{R}^{\varepsilon}X^\delta U$
            }
            }
            }
            }
 lookup matrix $S_{\text{rem}}$ for $U$ in $T_D$
 
 append $S_{\text{rem}}$ to $S_{\text{out}}$
 
 \textbf{return} $S_{\text{out}}$
\end{algorithm}

\section{Numerical Results}\label{sec:numerical-results}
We provide the scaling for the $N_{\mathbf{R}}$ compared to the infidelity, $\varepsilon$ in Fig.~\ref{fig:num_results}.  Using the exhaustive algorithm, we evaluated 100 randomly sampled angles in the interval $(-\pi/2,\pi/2)$ at 10 target precisions $\varepsilon\in\{1,0.5,0.25,0.1\hdots10^{-3}\}$. From this, we find on average
\begin{align}
\label{eq:numexh}
    N^E_{\mathbf{R}}(\varepsilon) &= \yintm + \slopem \text{log}_{10}(1 / \varepsilon)\notag\\
    &= \yintm + 4.113(3)\text{log}_{3}(1 / \varepsilon).
\end{align} For the Householder search algorithm we again evaluated 100 angles  at 10 target precisions $\varepsilon\in\{1,10^{-1}\hdots10^{-9}\}$ and 36 angles at $10^{-10}$. After a modest search through contraction factors, we found that $c=0.35$ yields short word lengths with tolerable additional runtime.  Using this faster algorithm gave
\begin{align}
\label{eq:numhos}
N^H_{\mathbf{R}}(\varepsilon) &= \yint + \slope \text{log}_{10}(1 / \varepsilon)\notag\\
&= \yint +  5.139(14)\text{log}_{3}(1 / \varepsilon).
\end{align}
We can use the average costs (Eqs. (\ref{eq:numhos}) and (\ref{eq:numexh})), along with Eqs. (\ref{eq:full-complexity}) and (\ref{eq:house-complexity}), to compute the average complexity of both algorithms. For the full $(\mathbf{C} + \mathbf{R})_3$ search, the average $c_1$ corresponds to the slope $\pgfmathprintnumber[fixed,precision=2]{\slopemthree}$, which yields an average complexity of $\mathcal{O}(\varepsilon^{\pgfmathprintnumber[fixed,precision=2]{\fullcomplexity}})$. In the case of the Householder search, the average $c_2$ is half the slope, $\pgfmathprintnumber[fixed,precision=3]{\slopethree}$, resulting in an average complexity of $\mathcal{O}(\varepsilon^{\pgfmathprintnumber[fixed,precision=2]{\householdercomplexity}})$.

In order to determine the average cost for an arbitrary single-qutrit SU(3) gate, these results should be multiplied by 6~\cite{Bocharov:2018zfk}. Given the importance of the $\mathbf{T}_3$ gate in discussions of qutrits, we investigate its approximate synthesis with the Householder search algorithm, finding $N_{R}^H$ for $\mathbf{T}_3$ aligns with that of the average gate costs. In passing, we also note that $N_{\mathbf{R}}$ exhibited a weak angular dependence: angles closer to $R_{3,\chi}$ lead to lower $N_\mathbf{R}$. 

\begin{figure}
    \centering
    \includegraphics[width=\linewidth]{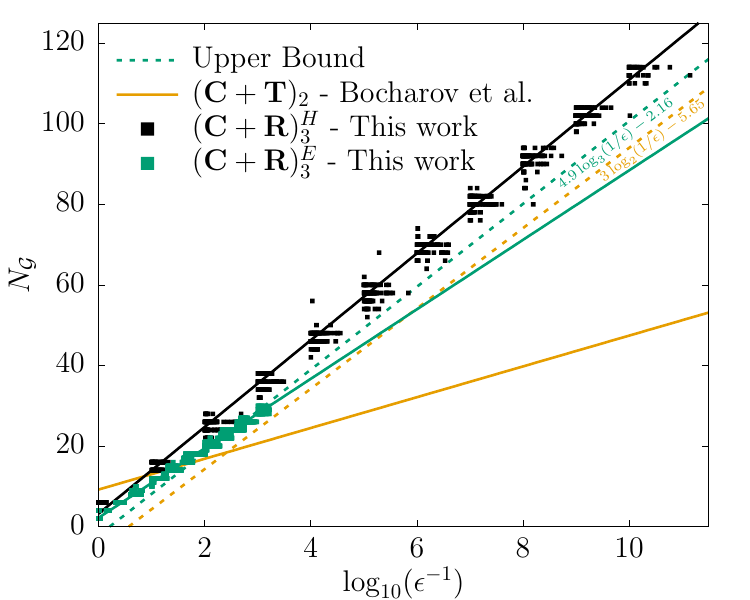}
    \caption{Scaling of the number of non-Clifford gates, $N_{\mathcal{G}}$ ($\mathbf{T}$ or $\mathbf{R}$ depending on qudit dimension) against the infidelity $\varepsilon$. Angles are uniformly sampled in the region $\theta\in(-\pi/2
     , \pi/2)$.}
    \label{fig:num_results}
\end{figure}

With these results, we can compare our single-qutrit synthesis method to that of implementing the same unitary on two qubits. For the general case of two-qubit circuits with CNOTs and $R^{\alpha}(\theta)$, i.e. $SU(4)$, the number of 15 single-qubit rotations is required~\cite{Shende:2004gqq}.  Restricting to the single-qutrit subspace of $SU(3)$, dimensional analysis bounds the cost as at least 10 $R^{\alpha}(\theta)$.  Using the average cost for synthesizing $R^Z(\theta)$ from \cite{PhysRevLett.114.080502} of $N^{RUS}_T=9.2+3.817\log_{10}(1/\varepsilon)=9.2+1.15\log_2(1/\varepsilon)$ would estimate that an arbitrary single-qutrit gate would require at least $10~N^{RUS}_T$.  Comparing to the costs in Eqs.~(\ref{eq:numexh}) and (\ref{eq:numhos}) imply that single-qutrit synthesis via our algorithms for $\mathbf{R}$ incurs an overhead factor as $\varepsilon\rightarrow 0$ of 1.35 and 1.69 respectively. If one instead considers a fiducial $\varepsilon=10^{-10}$, these factors reduce to 1.12 and 1.40.

\section{Conclusion}
\label{sec:conclusion}
In this work, we have demonstrated two algorithms to synthesize diagonal gates for qutrits using the Clifford + $\mathbf{R}$ gates. 
Our studies show that given a target infidelity $\varepsilon$ for a diagonal rotation gate, one can approximate a diagonal that requires approximately $\yint + \slope \text{log}_{10}(1 / \varepsilon)$ $\mathbf{R}$ gates.
These results open up the feasibility of using fault-tolerant qutrits for quantum simulations. While these results are valuable, they leave several open questions. First, while the prefactor $\slope$ for synthesizing diagonal gates is close to the lower bound of 10.27, multiplying by six to obtain arbitrary gates leaves us quite far from optimality.  Potential improvements could come from exploring repeat-until-success methods~\cite{PhysRevLett.114.080502} or identifying broader subclasses of gates that can be efficiently synthesized.  Another direction of research might investigate synthesis with other groups, such as $(\mathbf{C}+\mathbf{D})_3$ or $(\mathbf{C}+\mathbf{T})_3$.  Furthermore, similar to qubits, looking for larger transverse groups than Cliffords could further reduce the cost~\cite{Evra_2022,Kubischta:2024fuf} and potentially enable novel application-specific gate sets e.g. high energy physics~\cite{Kurkcuoglu:2024cfv,Gustafson:2024kym}.

\begin{acknowledgments}
The authors thank Namit Anand, Di Fang, Yao Ji, Aaron Lott, and Yu Tong for their helpful comments and suggestions. This material is based on work supported by the U.S. Department of Energy, Office of Science, National Quantum Information Science Research Centers, Superconducting Quantum Materials and Systems Center (SQMS) under contract number DE-AC02-07CH11359. This work was produced by Fermi Forward Discovery Group, LLC under Contract No. 89243024CSC000002 with the U.S. Department of Energy, Office of Science, Office of High Energy Physics. EG was supported by the NASA Academic Mission Services, Contract No. NNA16BD14C and the Intelligent Systems Research and Development-3 (ISRDS-3) Contract 80ARC020D0010 under the NASA-DOE interagency agreement SAA2-403602. EM acknowledges support from U.S. Department of Energy, Office of Science, Nuclear Physics, under the grant number DE-FG02-95ER40907. SZ acknowledges support from the U.S. Department of Energy, Office of Science, Accelerated Research in Quantum Computing Centers, Quantum Utility through Advanced Computational Quantum Algorithms, grant No. DE-SC0025572. DL acknowledges support from DMREF Award No. 1922165, Simons Targeted Grant Award No. 896630 and Willard Miller Jr. Fellowship.
\end{acknowledgments}

\bibliography{bibo}

\appendix 

\section{Enumeration algorithm for \texorpdfstring{$\mathbf{y} \in \mathcal{L}_2 \cap   \mathcal{D}(\mathbf{u}, f, \varepsilon)$}{}}\label{app:enumeration}
The goal is to enumerate all vectors $\mathbf{y} \in \mathcal{L}_2 \cap \mathcal{D}(\mathbf{u}, f, \varepsilon)$ for a fixed $f \geq 0$. As defined in \Cref{eq:lattice-vector}, the lattice vectors $\mathbf{y} \in \mathcal{L}_2$ take the form $\mathbf{y} = B_2 \mathbf{x}$, where $\mathbf{x} \in \mathbb{Z}^4$. By matrix multiplication, $\mathbf{y}$ can be written as
\begin{equation}
    \mathbf{y} = \left(x_1 - \frac{x_2}{2},\frac{\sqrt{3}}{2} x_2, x_3 - \frac{x_4}{2}, \frac{\sqrt{3}}{2} x_4\right)^T.
\end{equation}
It is convenient to parametrize $\mathbf{y}$ with half-integers $p_1, q_1, p_2, q_2$ such that $p_i := x_i - x_{2i}/2$ and $q_i := x_{2i}/2$ where $i = 1,2$. 
Each pair $(p_i, q_i)$ must satisfy the integer constraint $x_i = p_i + q_i$. 
Consequently, the lattice vectors take the form $\mathbf{y} = \left(p_1, \sqrt{3}q_1, p_2, \sqrt{3}q_2\right)^T$ and $\mathbf{y}^T \mathbf{y} = \sum\limits_{i=1}^{2} p^2_i + 3q^2_i$.

\begin{lemma}\label{lemma:y4-bound}
    For $\mathbf{y} \in  \mathcal{D}(\mathbf{u}, f, \varepsilon)$, $|y_4| \leq \sqrt{r^2_2 - r^2_1}$.
\end{lemma}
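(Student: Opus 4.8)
The plan is to exploit the structural fact that the hyperplane normal $\mathbf{u}=\frac{1}{\sqrt{2}}(\cos\theta/2,\sin\theta/2,-1,0)^T$ has vanishing fourth component, so that the constraint $\mathbf{u}^T\mathbf{y}\geq r_1$ defining $\mathcal{D}(\mathbf{u},f,\varepsilon)$ in \Cref{eq:approx-lattice-version} only restricts $(y_1,y_2,y_3)$, whereas the hypersphere bound $\mathbf{y}^T\mathbf{y}\leq r_2^2$ couples all four coordinates. Combining the two will pin down $|y_4|$.

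Concretely, the first step is to write $\mathbf{u}^T\mathbf{y}=u_1y_1+u_2y_2+u_3y_3$ (using $u_4=0$) and apply the Cauchy--Schwarz inequality:
\begin{equation}
\mathbf{u}^T\mathbf{y}\;\leq\;\sqrt{u_1^2+u_2^2+u_3^2}\,\sqrt{y_1^2+y_2^2+y_3^2}\;=\;\sqrt{y_1^2+y_2^2+y_3^2},
\end{equation}
where the last equality uses $u_1^2+u_2^2+u_3^2=\|\mathbf{u}\|^2-u_4^2=1$. Since $\mathbf{y}\in\mathcal{D}(\mathbf{u},f,\varepsilon)$ gives $\mathbf{u}^T\mathbf{y}\geq r_1$ with $r_1=3^{\lceil f/2\rceil}\eta(\varepsilon_i)>0$ (see \Cref{eq:radius}, noting $\eta(\varepsilon)=1-\varepsilon^2/2\in(0,1]$ for admissible $\varepsilon$), we may square both sides to obtain $y_1^2+y_2^2+y_3^2\geq r_1^2$.

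The final step is to subtract this from the hypersphere constraint: $y_4^2=\mathbf{y}^T\mathbf{y}-(y_1^2+y_2^2+y_3^2)\leq r_2^2-r_1^2$, hence $|y_4|\leq\sqrt{r_2^2-r_1^2}$. The right-hand side is real because $r_1\leq r_2$, which follows immediately from $\eta(\varepsilon_i)\leq 1$.

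There is no real obstacle in this argument; the only points requiring care are bookkeeping ones: checking that $\|\mathbf{u}\|=1$ exactly, so the Cauchy--Schwarz constant is $1$ rather than something larger, and that $r_1\le r_2$ so the bound is well defined. One could also note that, in the half-integer parametrization of \Cref{app:enumeration} where $y_4=\sqrt{3}\,q_2$, this lemma is precisely what will later cut down the enumeration range for $q_2$, analogously to how \Cref{eq:enum-q-bound} arises in the exhaustive-search case.
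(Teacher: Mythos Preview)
Your proof is correct and essentially identical to the paper's: both exploit $u_4=0$ to reduce $\mathbf{u}^T\mathbf{y}$ to an inner product on the first three coordinates, apply Cauchy--Schwarz (the paper calls it ``the triangle inequality'' and phrases it via the projector $\Pi_3$ onto $\operatorname{span}\{\mathbf{e}_1,\mathbf{e}_2,\mathbf{e}_3\}$) to get $y_1^2+y_2^2+y_3^2\geq r_1^2$, and then subtract from the sphere constraint $\mathbf{y}^T\mathbf{y}\leq r_2^2$. Your version is slightly more explicit in checking that $r_1>0$ (so squaring is legitimate) and that $r_1\leq r_2$ (so the bound is real), which the paper leaves implicit.
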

\begin{proof}
    Let $\mathbf{e}_i$ (with $i = 1,2,3,4$) be the canonical basis vectors of $\mathbb{R}^4$. 
    That is, e.g. $\mathbf{e}_1 = (1,0,0,0)^T$. Let $\Pi_3$ be the projector onto the subspace spanned by $\mathbf{e}_1, \mathbf{e}_2$ and $\mathbf{e}_3$. 
    Since $u_4 = 0$,  it follows that $\mathbf{u}^T \mathbf{y} = \mathbf{u}^T \Pi_3 \mathbf{y}$.
    By the triangle inequality, $(\Pi_3 \mathbf{y})^T (\Pi_3 \mathbf{y})  \geq r^2_1$. 
    Additionally, using the total norm constraint: $\mathbf{y}^T \mathbf{y} = (\Pi_3 \mathbf{y})^T (\Pi_3 \mathbf{y}) + y^2_4 \leq r^2_2$, implying that $y^2_4 \leq r^2_2 - r^2_1$ or equivalently, $|y_4| \leq \sqrt{r^2_2 - r^2_1}$. 
\end{proof}
This lemma constrains the sampling range for $q_2$:
\begin{align}
    \frac{1}{2}\left\lceil -2\, \sqrt{\frac{r^2_2 - r^2_1}{3}} \, \right\rceil \leq q_2 \leq \frac{1}{2}\left\lfloor 2\, \sqrt{\frac{r^2_2 - r^2_1}{3}} \, \right\rfloor.
\end{align}

For each $q_2$, the other components satisfy:
\begin{align}
    \label{eq:enum-p2}
    q_1 \cos(\alpha)  + \sqrt{3}q_1\sin(\alpha) &\geq \sqrt{2} r_1 + p_2,\notag\\
    q^2_1 + 3q^2_1 &\leq r^2_2 - 3q^2_2 - p^2_2
\end{align}
where $\alpha := \theta/2$. 
Since $(\cos\alpha, \sin\alpha)^T$ is a unit vector, applying the arguments in Lemma~\ref{lemma:y4-bound} gives:
\begin{align}
    \left|\sqrt{2}r_1 + p_2 \right| \leq \sqrt{r^2_2 - 3q^2_2 - p^2_2}.
\end{align}
This inequality holds only for $p_2$ within the interval:
\begin{align}
    \frac{1}{2}\left\lceil 2 \, p_{2,\text{min}} \right\rceil \leq p_2 \leq \frac{1}{2}\left\lfloor 2 \, p_{2,\text{max}} \right\rfloor,
\end{align}
where $ p_{2,(\text{max}, \text{min})} := \frac{1}{\sqrt{2}}\left(-r_1 \pm \sqrt{r^2_2 - r^2_1 - 3q^2_2}\right)$. Note that ensuring that the radicand is nonnegative provides an elementary proof of Lemma~\ref{lemma:y4-bound}.
With $p_2$ and $q_2$ determined, Eq.~\eqref{eq:enum-p2} can be rewritten as:
\begin{align}
    \label{eq:q1-p1}
    p_1 \cos(\alpha)  + \sqrt{3}q_1\sin(\alpha) &\geq \lambda_1, 
    \quad p^2_1 + 3q^2_1 \leq \lambda^2_2,
\end{align}
with $\lambda_1 := \sqrt{2} r_1 + p_2$ and $\lambda^2_2 := r^2_2 - 3q^2_2 - p^2_2$. Enumerating such $p_1$ and $q_1$ can be done using the results in \Cref{eq:enum-q-bound,eq:enum-p-bound}.

\section{Solving the norm equation}
\label{app:norm-eq}

This section describes how to solve the norm equation. Specifically, given a positive integer $N$, the problem is to find an Eisenstein integer $x = a + b\omega$ such that $|x|^2 = N$. Using the half-integer representation, $p = a - \frac{b}{2}, \; q = \frac{b}{2}$, the norm equation can be rearranged to:
\begin{align}
    p^2 + 3q^2 - N = 0. 
\end{align}
Interpreting this as a quadratic equation in $p$, the discriminant is given by $\Delta = 4N - 12q^2$. 
Real solutions for $p$ exist if $\Delta \geq 0$, which requires $q^2 \leq \frac{N}{3}$. 
Hence, valid values of $q$ are half-integers satisfying $|q| \leq \lfloor \sqrt{\frac{N}{3}} \rfloor$. 

Since the equation is symmetric under $q \to -q$, it suffices to consider only non-negative values of $q$. For each $q$, if $p = \pm\sqrt{N - 3q^2}$ such that $p + q \in \mathbb{Z}$, then a valid solution exists. Together, this gives an $x = (p+q) + (2q)\omega$ which satisfies the norm equation.
Finally, the complexity of this search is $\mathcal{O}(\sqrt{N})$.

It is important to note that because the norm of Eisenstein integers is multiplicative
\begin{equation}
    \left|\prod\limits_i x_i\right|^2 = \prod_i\left|x_i\right|^2\text{ for }x_i \in \mathcal{R}_3,
\end{equation} the norm equation $|x|^2 = N$ for $ \in \mathcal{R}_{3}$ can be solved with a factoring algorithm. Given an integer factorization $N = \prod_i p^{c_i}_i$, there is a method to solve $|x|^2 = N$ which relies on the relation between rational primes and Eisenstein primes, see e.g. Ref.~\cite{6483196}. Indeed, a rational prime $p \neq 3$ either remains prime in $\mathcal{R}_3$ or splits in $\mathcal{R}_3$. That is, if $p \in \mathbb{Z}$ is prime, then $p$ is also prime in $\mathcal{R}_3$ if $p \equiv 2 \, (\text{mod} \, 3)$. On the other hand, if $p \equiv 1\,(\text{mod} \,3)$, there exists $\eta \in \mathcal{R}_3$ such that $|\eta|^2 = p$. In the case $p = 3$,  $|1 - \omega|^2 = 3$.

Then, having the factorization for $N$, solving $|x|^2 = N$ reduces to solving each $|x_i|^2 = p_i$ for the case $p_i \equiv 1 \, (\text{mod} \, 3) $\footnote{For $p_i = 3$, $x_i = 1 - \omega$. For $p_i \equiv 2 \, \text{mod}\, 3$, $x_i = p_i^{c_i/2}$ if and only if $c_i$ is even because such $p_i$ is prime in $\mathcal{R}_3$.}. This can be solved using the method described earlier with complexity only $\mathcal{O}(\sqrt{p_i})$. However, integer factorization using a General Number Field Sieve (GNFS) algorithm~\cite{Buhler1993} would reduce the complexity. 

\section{Bound on \texorpdfstring{$|v_3|$}{}} \label{app:v3--normbound}
As shown in App.~\ref{app:norm-eq}, the complexity of finding $v_3$ via exhaustive search depends on $|v_3|$. This section establishes an upper bound on this value.
For all $\mathbf{y^\prime} \in \mathcal{L}_2 \cap \mathcal{D}(\mathbf{u}, f, \varepsilon)$, the triangle inequality implies $|\mathbf{u}^T \mathbf{y^\prime}| \leq |\mathbf{y^\prime}|$, which can be used to show $\mathbf{y^\prime}^T \mathbf{y^\prime} \geq r^2_1$.

For Eisenstein $\mathbf{y}^\prime$, it follows that $\mathbf{y^\prime}^{T} \mathbf{y}^\prime = |v^\prime_1|^2 + |v^\prime_2|^2 \leq r^2_2$. 
Applying the variable change of \Cref{eq:cov} we obtain
\begin{align}
    |v_1|^2 + |v_2|^2 &\geq 3^{-\bar{f}} r^2_1.
\end{align}
Remembering that $|v_3|^2 = 3^f - (|v_1|^2 + |v_2|^2)$, this equation simplfies by noting that $3^{2 \lceil f /2\rceil -\bar{f}} = 3^f$ to yield
\begin{align}
    |v_3| &\leq 3^{f/2} \varepsilon \sqrt{1 - \varepsilon^2/4} = \mathcal{O}\left(3^{f/2} \varepsilon\right).
\end{align}

\end{document}